\documentclass[runningheads]{llncs}
\usepackage[T1]{fontenc}
\usepackage{amsfonts}
\usepackage{graphicx}
\usepackage{amsmath}
\usepackage{amssymb}
\usepackage{algorithm}
\usepackage{algpseudocode}
\usepackage{booktabs}
\usepackage{multirow}
\usepackage{tikz}
\usetikzlibrary{shapes.geometric, arrows.meta}

\begin{document}
\title{TopoBudget: Persistent-Connectivity-Preserving Web Graph Sparsification for Reusable Community Analytics}
\titlerunning{Persistent-Connectivity-Preserving Graph Sparsification}

\author{Jianru Shen\orcidID{0009-0000-3546-9616}}
\authorrunning{J. Shen}
\institute{University of Montana, Missoula, MT 59812, USA \\
\email{js258133@umconnect.umt.edu}}

\maketitle              
\begin{abstract}
Web and social graphs are analyzed repeatedly for community structure, yet many of their edges are redundant for this purpose, which motivates sparsification. Existing sparsifiers preserve spectral quantities, cuts, local similarity, or a single clustering, but none preserves the thresholded connectivity structure of an edge-relevance filtration, the multiscale pattern by which groups form at high relevance and merge through weaker bridges. We study persistent-connectivity-preserving sparsification: given a graph, an edge-relevance filtration, and a proxy partition computed once during preprocessing, select a budgeted subgraph that preserves the labeled component partition at every threshold, and hence the zero-dimensional persistence diagram, while retaining community evidence for later analyses. Our method, TopoBudget, first extracts a tie-aware persistence backbone that enforces this constraint, then allocates the residual edge budget by greedily maximizing a backbone-conditioned submodular objective that rewards balanced recovery of proxy-internal degree. We prove exact preservation of the component partition at every threshold, and that the conditioned objective is monotone and submodular, so greedy attains a $(1-1/e)$ guarantee for the fixed-backbone residual problem. On held-out synthetic benchmarks and six real Web and social graphs at equal budget, TopoBudget gives the strongest community preservation among topology-preserving methods under Louvain, remains competitive under Infomap, incurs zero topology mismatch, and runs substantially faster than an effective-resistance baseline. A no-backbone ablation shows that, on the real graphs, the mandatory backbone improves average quality while providing the exact guarantee. TopoBudget thus couples exact multiscale connectivity with budgeted, reusable community preservation.
\keywords{Graph sparsification \and Persistent homology \and Community detection \and Submodular optimization \and Web graphs}
\end{abstract}

\section{Introduction}
Web platforms run on large social and information graphs: friendship and follower networks, hyperlink and co-citation graphs, and interaction graphs that underpin search, recommendation, and community discovery. These graphs are analyzed repeatedly, often by several community detectors and at several resolutions, to serve downstream applications, yet even at moderate scale they carry substantial edge redundancy. Retaining every edge inflates storage and makes each re-analysis costly, which motivates sparsification: replacing a graph by a smaller subgraph that preserves the properties of interest, so that later analyses run on the reduced graph. Existing sparsifiers each target a single objective, preserving spectral quantities, cuts, local similarity, or one predefined clustering. None of these enforces equality of the component partition across the thresholds of an edge-relevance filtration: strong intra-community ties appear at high relevance, while weaker bridges merge groups only at lower relevance. Removing the wrong bridge therefore changes this multiscale hierarchy even when the sparsified graph stays connected or a single clustering is matched.

\begin{figure}[t]
\centering
\includegraphics[width=0.86\textwidth]{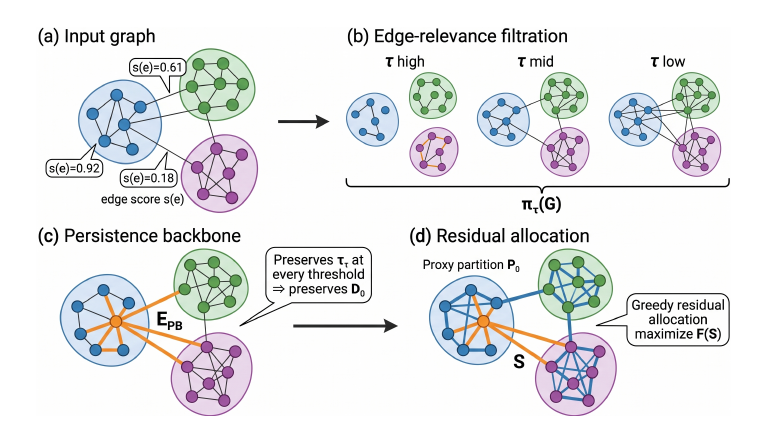}
\caption{Overview of TopoBudget. An edge-relevance filtration reveals how high-relevance groups form and merge through weaker bridges. TopoBudget first extracts a tie-aware persistence backbone $E_{\mathrm{PB}}$ that preserves $\pi_\tau(G)$ at every threshold, then allocates the residual budget $S$ by maximizing a backbone-conditioned submodular objective within the proxy partition $P_0$.}
\label{fig:overview}
\end{figure}

We formulate this missing requirement as \emph{persistent-connectivity-preserving sparsification}: given a graph, an edge-relevance filtration, and a proxy partition computed once during preprocessing, select a budgeted subgraph that preserves the labeled component partition at every threshold, and hence the zero-dimensional persistence diagram, while retaining community evidence for later analyses. This setting matches Web analytics pipelines in which one reduced graph is reused by multiple community detectors or resolutions rather than optimized for a single fixed clustering. TopoBudget separates feasibility from utility, as summarized in Figure~\ref{fig:overview}. A tie-aware persistence backbone preserves every threshold-level component partition; a backbone-conditioned submodular objective then spends the remaining budget to recover balanced proxy-internal degree. The backbone itself is a classical single-linkage certificate, but using it as a mandatory feasibility layer and optimizing the residual budget for reusable community analytics yields a sparse graph with both exact multiscale connectivity and strong community preservation.

Our contributions are fourfold: (i) a sparsification formulation that preserves the labeled component-partition filtration induced by an edge-relevance score, a requirement stronger than ordinary connectivity and stronger than preserving the unlabeled zero-dimensional persistence diagram alone; (ii) TopoBudget, a tie-aware backbone plus a backbone-conditioned submodular residual allocation objective; (iii) proofs of exact threshold-level component preservation and a $(1-1/e)$ greedy guarantee for the fixed-backbone residual problem; and (iv) an evaluation on held-out synthetic instances and six real Web and social graphs against topology-preserving and unconstrained baselines under two detectors, including a backbone ablation and preprocessing-inclusive runtime analysis.

\section{Related Work}
\label{sec:related}

\paragraph{Sparsification objectives.}
Much work reduces edge count while preserving a chosen structural quantity. Spectral sparsifiers approximate the graph Laplacian, with effective-resistance sampling bounding eigenvalues and quadratic forms~\cite{ref_spielman_teng,ref_spielman_srivastava}. Local sparsifiers keep each node's strongest incident edges, ranked by neighborhood overlap~\cite{ref_satuluri} or by neighbor degree~\cite{ref_lindner}. These methods preserve spectral quantities, cuts~\cite{ref_benczur_karger}, or local structure, but none enforces equality of the component partition at every filtration threshold, so a faithful subgraph can still merge groups at the wrong scales.

\paragraph{Community-aware sparsification.}
Closest to our motivation is community-aware sparsification, which retains clustering structure by a density or path-based criterion~\cite{ref_gionis}, and influence sparsification, which keeps a propagation backbone~\cite{ref_influence_sparsify}. These share our goal of preserving community structure~\cite{ref_fortunato}, but differ from our setting: they use an implicit rather than a fixed edge budget, are evaluated by density or path-length preservation rather than equal-budget community recovery, and give no exact multiscale-topology guarantee. Our backbone constraint is orthogonal to such residual criteria, fixing which edges are mandatory and leaving the rest of the budget free. We therefore compare against controlled equal-budget residual baselines that share the same backbone, so differences isolate how the remaining budget is allocated.

\paragraph{Topological and submodular building blocks.}
Persistent homology has been used to extract topological features of graphs for comparison or learning~\cite{ref_ph_graphs}. We instead use zero-dimensional persistence as an exact preservation constraint, relying on the classical link between single-linkage hierarchies and $D_0$ persistence~\cite{ref_singlelinkage}. The residual allocation is monotone submodular maximization with the standard greedy guarantee~\cite{ref_submodular}, a framework widely used for network selection such as influence maximization and outbreak detection~\cite{ref_influence,ref_outbreak}; stochastic variants match it in near-linear time~\cite{ref_stochastic_greedy}. Our contribution is the combination: a component-partition-filtration certificate coupled with a backbone-conditioned residual objective for reusable community analytics.

\section{Problem Formulation}
\label{sec:problem}

\paragraph{Filtration and component partitions.}
Let $G=(V,E)$ be an undirected graph with $n=|V|$ nodes and $m=|E|$ edges. We attach to every edge $e\in E$ an edge-relevance score $s(e)\in\mathbb{R}$, which measures how strongly the two endpoints are tied; in this work $s$ is the Jaccard similarity of the endpoint neighborhoods, defined in Section~\ref{sec:method}. For a threshold $\tau\in\mathbb{R}$, let
\begin{equation}
G_\tau \;=\; \bigl(V,\; \{\,e\in E : s(e)\ge \tau\,\}\bigr)
\end{equation}
be the subgraph that keeps exactly the edges with score at least $\tau$. As $\tau$ decreases from $\max_e s(e)$ to $\min_e s(e)$, edges enter in non-increasing score order and $G_\tau$ grows from the empty edge set to $G$. This nested family $\{G_\tau\}_\tau$ is the descending filtration induced by $s$. For each $\tau$, let $\pi_\tau(G)$ denote the partition of $V$ into the connected components of $G_\tau$: two nodes lie in the same block if and only if a path of edges with score at least $\tau$ joins them. As $\tau$ decreases, components merge and never split, so $\pi_\tau(G)$ coarsens, and the family $\{\pi_\tau(G)\}_\tau$ records how groups form at high relevance and merge through weaker bridges at low relevance.

\paragraph{Component-partition filtration and zero-dimensional persistence.}
We call the family $\{\pi_\tau(G)\}_\tau$ the labeled component-partition filtration of $G$: for each threshold it records not only how many components exist but which nodes each contains. Because all nodes are present throughout the edge filtration, the informative events in zero-dimensional persistent homology are the merges of components as $\tau$ decreases, and each merge is recorded by its score. Collecting these merge scores yields the zero-dimensional persistence diagram $D_0(G)$, a multiset of birth and death values. The diagram is a strictly coarser summary than the labeled filtration: preserving $\{\pi_\tau(G)\}_\tau$ implies preserving $D_0(G)$, but not conversely, since $D_0(G)$ retains only the multiset of merge scores and not which nodes belong to which component. We therefore target the stronger property, exact preservation of the labeled component-partition filtration, and obtain preservation of $D_0(G)$ as a consequence. This filtration is exactly the single-linkage merge structure of $G$~\cite{ref_ph_background,ref_singlelinkage}.

\paragraph{The general sparsification problem.}
We are given, in addition to $G$ and $s$, a proxy partition $P_0$ of $V$ obtained by running a community detector once on $G$ during preprocessing, and an edge budget $B\in\mathbb{N}$. The proxy $P_0$ is a structural reference for the objective below and is not assumed to be ground truth. Let $\Phi(E')$ be a community-preservation objective that scores an edge subset $E'\subseteq E$ by how well it recovers the community structure indicated by $P_0$; we instantiate $\Phi$ concretely in Section~\ref{sec:method}.

\begin{definition}[Persistent-connectivity-preserving sparsification]
\label{def:pcps}
Given $G=(V,E)$, edge-relevance score $s$, proxy partition $P_0$, budget $B$, and objective $\Phi$, find an edge subset $E_s\subseteq E$ with $|E_s|\le B$ such that the sparsifier $G_s=(V,E_s)$, scored by the restriction of $s$ to $E_s$, satisfies
\begin{equation}
\pi_\tau(G_s)\;=\;\pi_\tau(G)\quad\text{for every threshold }\tau,
\label{eq:exact}
\end{equation}
and, among all subsets meeting \eqref{eq:exact} and the budget, maximizes $\Phi(E_s)$.
\end{definition}

Condition~\eqref{eq:exact} is the exact multiscale requirement: at every threshold the sparsifier must induce the same labeled component partition as $G$. By the previous paragraph this implies $D_0(G_s)=D_0(G)$, and is strictly stronger than preserving the unlabeled diagram alone. It is also far stronger than merely keeping $G_s$ connected, which would constrain only the single threshold $\tau=\min_e s(e)$.

\paragraph{Score invariance, backbone, and the residual problem.}
Two observations make Definition~\ref{def:pcps} tractable. First, retained edges keep their original scores: $s$ on $E_s$ is the restriction of $s$ on $E$, never recomputed from $G_s$, so that the filtration of $G_s$ is compared to that of $G$ threshold by threshold. Second, any $E_s$ satisfying \eqref{eq:exact} must contain enough edges to reproduce every merge of $G$; the minimal such set is a maximum-score spanning forest of $G$, which we call the persistence backbone $E_{\mathrm{PB}}$ and construct in Section~\ref{sec:method}. Its size is $|E_{\mathrm{PB}}|=n-c$, where $c$ is the number of connected components of $G$; we preprocess each graph to its largest connected component, so $c=1$ and $E_{\mathrm{PB}}$ is a spanning tree with $n-1$ edges. Feasibility therefore requires $B\ge|E_{\mathrm{PB}}|=n-1$.

Writing $E_s=E_{\mathrm{PB}}\cup S$ with $S\subseteq E\setminus E_{\mathrm{PB}}$, the budget splits as $|E_s|=|E_{\mathrm{PB}}|+|S|$. Fixing the backbone reduces the search to the residual set $S$, which is the problem TopoBudget solves.

\begin{definition}[Backbone-conditioned residual problem]
\label{def:residual}
Given a fixed persistence backbone $E_{\mathrm{PB}}$ satisfying \eqref{eq:exact} and a residual budget $k=B-|E_{\mathrm{PB}}|$, find $S\subseteq E\setminus E_{\mathrm{PB}}$ with $|S|\le k$ that maximizes the residual objective
\begin{equation}
F(S)\;=\;\Phi(E_{\mathrm{PB}}\cup S)-\Phi(E_{\mathrm{PB}}).
\label{eq:residual_general}
\end{equation}
\end{definition}

Every $E_{\mathrm{PB}}\cup S$ is feasible for Definition~\ref{def:pcps} by construction, so any solution of Definition~\ref{def:residual} is feasible for the general problem. When the maximum-score spanning forest is not unique, different valid backbones exist; our guarantees are stated for the fixed backbone that TopoBudget constructs, not for a search over all backbones. This conditioning is what makes the residual allocation a submodular optimization with a provable guarantee, as shown in Section~\ref{sec:theory}.

\section{Method}
\label{sec:method}

TopoBudget realizes the residual problem of Definition~\ref{def:residual} in three stages: it scores edges by neighborhood similarity, extracts a tie-aware persistence backbone $E_{\mathrm{PB}}$ that already satisfies the exact-topology constraint~\eqref{eq:exact}, and then spends the residual budget on a backbone-conditioned submodular objective driven by the proxy partition $P_0$.

\paragraph{Edge-relevance score.}
For an unweighted graph we score each edge by the Jaccard similarity of its endpoint neighborhoods. Let $N(u)$ denote the set of neighbors of node $u$. For an edge $e=(u,v)$,
\begin{equation}
s(u,v)\;=\;\frac{|N(u)\cap N(v)|}{|N(u)\cup N(v)|}.
\label{eq:jaccard}
\end{equation}
A high score means the endpoints share many common neighbors, which is typical of an edge inside a community, while a low score is typical of a bridge whose endpoints have few neighbors in common. Ranking edges by \eqref{eq:jaccard} therefore makes the descending filtration of Section~\ref{sec:problem} reveal communities at high relevance and bridges at low relevance.

\paragraph{Stage I: tie-aware persistence backbone.}
The backbone $E_{\mathrm{PB}}$ is the minimal edge set that reproduces every merge of the filtration. We process edges in non-increasing score order and maintain the connected components with a union-find structure: an edge is added to $E_{\mathrm{PB}}$ if and only if its endpoints lie in different components when it is processed, that is, if it triggers a merge. This yields a maximum-score spanning forest, so $|E_{\mathrm{PB}}|=n-c$ with $c=1$ after restriction to the largest connected component. Ties need care: when several edges share a score $\tau$, the component partition at $\tau$ must equal the one obtained once all of them are present. We therefore process an equal-score block together, and within the block add exactly the edges that merge components that are still distinct, which is a spanning forest of the block's contribution. This guarantees $\pi_\tau(G_{\mathrm{PB}})=\pi_\tau(G)$ at the tie value as well, and Section~\ref{sec:theory} proves the property for every threshold.

\paragraph{Stage II: proxy partition.}
We obtain the proxy partition $P_0$ by running a single community detection on the full graph $G$ during preprocessing; we use Leiden~\cite{ref_leiden}. The proxy tells the residual objective which edges are internal to a candidate community, and its cost is counted in preprocessing. It is not treated as ground truth. We note that Leiden and the primary evaluation detector Louvain both optimize modularity~\cite{ref_modularity}, so to avoid crediting the objective for a shared inductive bias we additionally evaluate with Infomap, which uses a different, flow-based criterion; the primary comparisons in Section~\ref{sec:experiments} report both.

\paragraph{Stage III: backbone-conditioned degree-coverage objective.}
The residual budget is allocated to preserve community structure through a concave coverage of internal degree, which instantiates the abstract objective $\Phi$ of Definition~\ref{def:pcps}. For an edge set $T\subseteq E$ and a node $u$, let $d_T^{\mathrm{in}}(u)$ be the number of neighbors of $u$ in $T$ that lie in the same block of $P_0$ as $u$, the internal degree of $u$ in $T$. Let $V_{\mathrm{int}}=\{u\in V: d_G^{\mathrm{in}}(u)>0\}$ be the nodes with at least one internal edge in $G$. For each $u\in V_{\mathrm{int}}$ define a normalization scale
\begin{equation}
\kappa_u\;=\;\max\bigl(1,\,\lceil \rho\, d_G^{\mathrm{in}}(u)\rceil\bigr),
\label{eq:kappa}
\end{equation}
with $\rho>0$ fixed to $\rho=2$. The coverage objective is
\begin{equation}
\Phi(T)\;=\;U(T)\;=\;\frac{1}{|V_{\mathrm{int}}|}\sum_{u\in V_{\mathrm{int}}}\sqrt{\min\!\left(1,\;\frac{d_T^{\mathrm{in}}(u)}{\kappa_u}\right)}.
\label{eq:objective}
\end{equation}
The square root makes the first internal edges of a node the most valuable and the marginal value decrease as its internal neighborhood fills, which spreads the budget across nodes rather than concentrating it on a few high-degree hubs. With $\rho=2$ the ratio $d_T^{\mathrm{in}}(u)/\kappa_u$ stays below about one half, so $\kappa_u$ acts as a per-node normalization that keeps the terms comparable rather than a saturation level a node can reach; the inner $\min$ caps each term at one and keeps $U$ bounded.

Conditioning on the backbone yields the residual objective of Definition~\ref{def:residual}. For a candidate set $S\subseteq E\setminus E_{\mathrm{PB}}$,
\begin{equation}
F(S)\;=\;U(E_{\mathrm{PB}}\cup S)-U(E_{\mathrm{PB}}),
\label{eq:residual}
\end{equation}
so that $F(\varnothing)=0$ and $F$ credits only the internal degree the backbone does not already supply. Section~\ref{sec:theory} shows that $F$ is monotone and submodular.

\paragraph{Budget and algorithm.}
Given a residual fraction $\alpha\in(0,1)$, the total edge budget is $B=|E_{\mathrm{PB}}|+k$ with $k=\lfloor \alpha\,(m-|E_{\mathrm{PB}}|)\rfloor$, so the sparsifier keeps the backbone and $k$ further edges. We select the residual set $S$ by greedy maximization of $F$: starting from $S=\varnothing$, repeatedly add the candidate edge of largest marginal gain $F(S\cup\{e\})-F(S)$ until $|S|=k$, with lazy evaluation of marginal gains. Algorithm~\ref{alg:topobudget} summarizes the procedure. For graphs large enough that exact greedy becomes costly, the same objective admits a stochastic greedy variant that samples a small candidate pool each round and retains a $(1-1/e-\epsilon)$ guarantee~\cite{ref_stochastic_greedy}; we use exact greedy throughout, as it is fast at the scale of our graphs, and leave a large-scale study of the stochastic variant to future work.

\begin{algorithm}
\caption{TopoBudget}
\label{alg:topobudget}
\begin{algorithmic}[1]
\State \textbf{Input:} graph $G=(V,E)$, score $s$, proxy $P_0$, residual fraction $\alpha$
\State compute $s(e)$ for all $e\in E$ \Comment{Eq.~\eqref{eq:jaccard}}
\State $E_{\mathrm{PB}}\gets$ maximum-score spanning forest by tie-aware union-find
\State $k\gets \lfloor \alpha\,(m-|E_{\mathrm{PB}}|)\rfloor$;\quad $S\gets\varnothing$
\While{$|S|<k$}
  \State $e^\ast\gets \arg\max_{e\in (E\setminus E_{\mathrm{PB}})\setminus S} F(S\cup\{e\})-F(S)$
  \State $S\gets S\cup\{e^\ast\}$
\EndWhile
\State \textbf{return} $E_s=E_{\mathrm{PB}}\cup S$
\end{algorithmic}
\end{algorithm}

\section{Theoretical Guarantees}
\label{sec:theory}

We prove the two properties used in Definitions~\ref{def:pcps} and~\ref{def:residual}: every TopoBudget output preserves the multiscale connectivity exactly, and the residual objective is monotone and submodular, so greedy allocation is near optimal. Throughout, $G_\tau$, $\pi_\tau$, and $D_0$ are as in Section~\ref{sec:problem}, and for an edge set $A\subseteq E$ we write $A_\tau=\{e\in A: s(e)\ge\tau\}$ for its edges of score at least $\tau$.

\subsection{Exact Multiscale Preservation}

We first record the property of the backbone that drives the argument.

\begin{lemma}
\label{lem:forest}
Let $E_{\mathrm{PB}}$ be a maximum-score spanning forest of $G$ built by the tie-aware union-find of Section~\ref{sec:method}. Then for every threshold $\tau$, the graph $(V,(E_{\mathrm{PB}})_\tau)$ has the same connected components as $G_\tau$, that is $\pi_\tau(V,(E_{\mathrm{PB}})_\tau)=\pi_\tau(G)$.
\end{lemma}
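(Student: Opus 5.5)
The plan is to prove the two inclusions between the partitions separately, fixing an arbitrary threshold $\tau$. Write $A=E_{\mathrm{PB}}$, so $A_\tau=\{e\in E_{\mathrm{PB}}: s(e)\ge\tau\}$.

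\emph{Refinement.} Since $A_\tau\subseteq E_\tau$, every connected component of $(V,A_\tau)$ lies inside a connected component of $G_\tau$. Hence $\pi_\tau(V,A_\tau)$ refines $\pi_\tau(G)$.

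\emph{Converse.} It suffices to show that for every edge $(u,v)\in E_\tau$, the endpoints $u$ and $v$ are joined by a path in $A_\tau$. Connectivity in $G_\tau$ then lifts edge by edge along any path.

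To prove this I will use an invariant of the tie-aware union-find. Process the distinct score values $\sigma_1>\sigma_2>\dots$ in decreasing order, and let $\mathcal{U}_j$ be the union-find partition after block $j$ (the edges of score $\sigma_j$) is finished. The claim is that $\mathcal{U}_j$ equals the component partition of $(V,A_{\ge\sigma_j})$, and also equals the component partition of $(V,E_{\ge\sigma_j})$. I will prove this by induction on $j$:
\begin{itemize}
\item The first equality holds because the union-find performs a union exactly when it adds an edge to $A$.
\item The second equality holds because every edge of block $j$ is examined. Either its endpoints are already in the same set, and hence, by the first equality, already connected in $A_{\ge\sigma_j}$; or the edge is added and merges the two sets.
\end{itemize}
Finally, for an arbitrary $\tau$, let $\sigma_j$ be the smallest score value with $\sigma_j\ge\tau$. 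Then $E_\tau=E_{\ge\sigma_j}$ and $A_\tau=A_{\ge\sigma_j}$, so the claim at $\sigma_j$ gives the result. If $\tau$ exceeds every score, both edge sets are empty and both partitions consist of singletons.

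\emph{Main obstacle.} The step needing care is the handling of ties inside a block. The union-find state is updated mid-block, so an edge $e$ of score $\sigma_j$ may be skipped because its endpoints were connected by another edge of the same score $\sigma_j$ added earlier in the block. This is harmless: that connecting edge also has score $\ge\tau$ whenever $e$ does, so it lies in $A_\tau$ as well. The invariant is therefore only asserted at block boundaries, and the only thresholds that matter are exactly the block values. I would state this explicitly, because a within-block check against a partial partition would not match $G_\tau$.
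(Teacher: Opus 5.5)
Your proof is correct and takes essentially the same route as the paper. Both arguments get refinement from $(E_{\mathrm{PB}})_\tau\subseteq E_\tau$, and both get the converse from the same dichotomy: each edge of $E_\tau$ is either added to the backbone, or its endpoints are already joined by earlier-processed backbone edges of score at least $s(e)\ge\tau$, and one then chains along a path. The only difference is packaging: you state this as a block-boundary invariant and reduce an arbitrary $\tau$ to the smallest score value $\sigma_j\ge\tau$, whereas the paper argues directly at arbitrary $\tau$, using that edges processed before $e$ have score at least $s(e)$.
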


\begin{proof}
Fix $\tau$. Since $(E_{\mathrm{PB}})_\tau\subseteq E_\tau$, any two nodes connected in $(V,(E_{\mathrm{PB}})_\tau)$ are connected in $G_\tau$, so the components of $(E_{\mathrm{PB}})_\tau$ refine those of $G_\tau$. For the converse, let $u,v$ lie in the same component of $G_\tau$, joined by a path $P$ using only edges of score at least $\tau$. Consider any edge $e=(a,b)$ on $P$. The union-find processes edges in non-increasing score order; at the moment $e$ is processed, either $a$ and $b$ already lie in the same component, in which case they are joined by previously added backbone edges of score at least $s(e)\ge\tau$, or $e$ is added to $E_{\mathrm{PB}}$. In both cases $a$ and $b$ are connected in $(V,(E_{\mathrm{PB}})_\tau)$. Chaining over the edges of $P$ connects $u$ and $v$ in $(V,(E_{\mathrm{PB}})_\tau)$. Hence the components of $G_\tau$ refine those of $(E_{\mathrm{PB}})_\tau$, and the two partitions coincide. The argument uses only $s(e)\ge\tau$, so it holds at tie values, where all equal-score edges are processed together as in Section~\ref{sec:method}. \qed
\end{proof}

\begin{theorem}
\label{thm:exact}
Let $E_s=E_{\mathrm{PB}}\cup S$ with $S\subseteq E\setminus E_{\mathrm{PB}}$ be any TopoBudget output, scored by the restriction of $s$ to $E_s$. Then $\pi_\tau(G_s)=\pi_\tau(G)$ for every threshold $\tau$, and consequently $D_0(G_s)=D_0(G)$.
\end{theorem}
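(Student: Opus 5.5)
The plan is a sandwich argument at each fixed threshold, using Lemma~\ref{lem:forest} as the lower bound and $G$ itself as the upper bound. Fix $\tau$. Because retained edges keep their original scores (score invariance), the threshold graph of the sparsifier is
\[
(G_s)_\tau = \bigl(V,\,(E_s)_\tau\bigr), \qquad (E_s)_\tau = (E_{\mathrm{PB}})_\tau \cup S_\tau .
\]
This gives the chain of edge-set inclusions
\[
(E_{\mathrm{PB}})_\tau \subseteq (E_s)_\tau \subseteq E_\tau .
\]
Adding edges can only merge components. So the partitions satisfy
\[
\pi_\tau\bigl(V,(E_{\mathrm{PB}})_\tau\bigr) \preceq \pi_\tau(G_s) \preceq \pi_\tau(G),
\]
where $\preceq$ means ``refines''. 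By Lemma~\ref{lem:forest} the two outer partitions coincide. Refinement is antisymmetric, so the middle partition equals both, and $\pi_\tau(G_s)=\pi_\tau(G)$.

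I will state the monotonicity fact used above explicitly. If $A\subseteq A'$, then every path in $(V,A)$ is a path in $(V,A')$. Hence each block of the $A$-partition lies inside a block of the $A'$-partition. This holds for every real $\tau$, including tie values and thresholds outside the score range, so no case split is needed.

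For the consequence $D_0(G_s)=D_0(G)$, I will argue that $D_0$ is a function of the labeled filtration $\tau\mapsto\pi_\tau$ alone. All nodes are born at the start. Deaths occur exactly at the values $\tau$ where the number of blocks drops. The multiplicity of each death value is the drop in block count across $\tau$, namely
\[
\lim_{\tau'\downarrow\tau}\bigl|\pi_{\tau'}\bigr| \;-\; \bigl|\pi_\tau\bigr| .
\]
The essential classes correspond to the final blocks. Two filtrations with identical partitions at every $\tau$ therefore produce identical multisets of merge scores, which is the ``preserving the filtration implies preserving $D_0$'' remark from Section~\ref{sec:problem}.

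The main step requiring care is the score-invariance point. The statement says $G_s$ is scored by the restriction of $s$, so $(E_s)_\tau$ really is $E_s\cap E_\tau$. If $s$ were recomputed as Jaccard on $G_s$, the inclusion $(E_s)_\tau\subseteq E_\tau$ could fail and the argument would break. I will cite that convention explicitly. The remaining work is already done by Lemma~\ref{lem:forest}.
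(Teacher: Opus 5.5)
Your proposal is correct and follows essentially the same route as the paper: the sandwich $(E_{\mathrm{PB}})_\tau\subseteq(E_s)_\tau\subseteq E_\tau$ fixes $\pi_\tau(G_s)$ between two partitions that Lemma~\ref{lem:forest} shows are equal, and $D_0$ is then read off the partition family. Your explicit death-multiplicity formula and your score-invariance remark make the paper's argument more careful without changing it.
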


\begin{proof}
Fix $\tau$. Because $E_{\mathrm{PB}}\subseteq E_s\subseteq E$, taking score-at-least-$\tau$ subsets preserves the inclusions $(E_{\mathrm{PB}})_\tau\subseteq (E_s)_\tau\subseteq E_\tau$. The right inclusion makes the components of $(E_s)_\tau$ refine those of $G_\tau$, since adding edges only merges components. The left inclusion with Lemma~\ref{lem:forest} gives the reverse: any two nodes in the same component of $G_\tau$ are already connected in $(V,(E_{\mathrm{PB}})_\tau)$, hence in $(V,(E_s)_\tau)$. The two partitions therefore coincide, $\pi_\tau(G_s)=\pi_\tau(G)$.

Since all nodes persist through the filtration, the zero-dimensional persistence diagram is determined by the merges recorded in $\{\pi_\tau\}_\tau$: each death is a threshold at which two blocks of $\{\pi_\tau\}_\tau$ join, and these merge scores are read directly off the family. As $\pi_\tau(G_s)=\pi_\tau(G)$ for all $\tau$, the merge scores agree, so $D_0(G_s)=D_0(G)$. Preserving the labeled family $\{\pi_\tau\}_\tau$ is strictly stronger than preserving $D_0$ alone, and TopoBudget attains the stronger property. \qed
\end{proof}

Theorem~\ref{thm:exact} holds for every feasible budget, since it constrains only the inclusion $E_{\mathrm{PB}}\subseteq E_s$ and not the size of $S$. Section~\ref{sec:experiments} verifies it empirically: every backbone-constrained method attains zero component-partition mismatch across all sampled thresholds, whereas unconstrained sparsifiers break the partition at a large fraction of them.

\subsection{Monotonicity, Submodularity, and Greedy Guarantee}

Recall from Section~\ref{sec:method} that $U(T)=\frac{1}{|V_{\mathrm{int}}|}\sum_{u\in V_{\mathrm{int}}} g\!\left(d_T^{\mathrm{in}}(u)/\kappa_u\right)$ with $g(z)=\sqrt{\min(1,z)}$, and that the residual objective is $F(S)=U(E_{\mathrm{PB}}\cup S)-U(E_{\mathrm{PB}})$.

\begin{theorem}
\label{thm:submodular}
For a fixed backbone $E_{\mathrm{PB}}$, the residual objective $F$ is normalized, monotone non-decreasing, and submodular on subsets of $E\setminus E_{\mathrm{PB}}$. Consequently, greedy maximization of $F$ under the cardinality budget $k$ returns a set $S$ with $F(S)\ge(1-1/e)\,F(S^\ast)$, where $S^\ast$ is an optimal solution of the backbone-conditioned residual problem of Definition~\ref{def:residual}.
\end{theorem}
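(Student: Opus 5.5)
The plan is to reduce everything to a per-node statement about a concave function of an integer count, and then invoke the standard greedy theorem. Normalization is immediate: $F(\varnothing)=U(E_{\mathrm{PB}})-U(E_{\mathrm{PB}})=0$.

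For monotonicity and submodularity, write, for $S\subseteq E\setminus E_{\mathrm{PB}}$ and $u\in V_{\mathrm{int}}$,
\[
d^{\mathrm{in}}_{E_{\mathrm{PB}}\cup S}(u)=b_u+\sum_{e\in S}w_u(e).
\]
Here $b_u=d^{\mathrm{in}}_{E_{\mathrm{PB}}}(u)$ is a constant, and $w_u(e)\in\{0,1\}$ indicates that $e$ is incident to $u$ and its other endpoint lies in the same block of $P_0$ as $u$. This is a modular, nonnegative set function of $S$ plus a constant. Each summand of $F$ is therefore $h_u(S)=\phi_u\bigl(\sum_{e\in S}w_u(e)\bigr)$ with $\phi_u(x)=g\bigl((b_u+x)/\kappa_u\bigr)-g(b_u/\kappa_u)$, up to the positive factor $1/|V_{\mathrm{int}}|$.

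Next I would verify that $g(z)=\sqrt{\min(1,z)}$ is non-decreasing and concave on $[0,\infty)$. It is the minimum of the concave functions $\sqrt z$ and $1$, and a minimum of concave functions is concave; both pieces are non-decreasing. Composition with the increasing affine map $x\mapsto(b_u+x)/\kappa_u$ preserves both properties, so $\phi_u$ is non-decreasing and concave.

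The key step is the classical fact that a non-decreasing concave function of a nonnegative modular function is monotone submodular. To prove it, take $A\subseteq A'\subseteq E\setminus E_{\mathrm{PB}}$ and $e\notin A'$, and set $x=\sum_{A}w_u$, $x'=\sum_{A'}w_u$, so $x\le x'$, with $\delta=w_u(e)\ge0$. Monotonicity is $\phi_u(x+\delta)\ge\phi_u(x)$. Submodularity is $\phi_u(x+\delta)-\phi_u(x)\ge\phi_u(x'+\delta)-\phi_u(x')$, which follows from concavity: the increments of a concave function over intervals of equal length are non-increasing as the interval shifts right. Only integer arguments occur, so no differentiability is needed, even at the kink $z=1$. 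Nonnegative sums of monotone submodular functions are monotone submodular, so $F$ is.

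Finally, I would invoke the Nemhauser--Wolsey--Fisher theorem~\cite{ref_submodular}: for a normalized monotone submodular function under a cardinality constraint $|S|\le k$, greedy achieves $F(S)\ge(1-1/e)F(S^\ast)$. This requires that greedy selects exactly $k$ elements, or fewer only if the ground set is exhausted, which holds for Algorithm~\ref{alg:topobudget}. Lazy evaluation returns the same choices, since by submodularity stale marginal gains are upper bounds. I expect no serious obstacle. The only point requiring care is that concavity survives the $\min$ cap, which the min-of-concave argument handles, and that conditioning on the backbone merely shifts each node's count by the constant $b_u$, which preserves concavity of $\phi_u$.
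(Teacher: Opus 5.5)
Your proof is correct and follows essentially the same route as the paper: per-node internal degree as a nonnegative modular count, composition with the non-decreasing concave $g$, nonnegative sums, then the Nemhauser--Wolsey--Fisher greedy bound. The differences are cosmetic. You absorb the backbone as a constant shift $b_u$ inside each $\phi_u$, where the paper instead cites that the marginal of a monotone submodular function over a fixed base stays monotone submodular. You also justify the concavity of $g$ explicitly via $g(z)=\min(1,\sqrt{z})$, which the paper only asserts.
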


\begin{proof}
Fix a node $u\in V_{\mathrm{int}}$. For an edge set $T$, the internal degree $d_T^{\mathrm{in}}(u)=\sum_{e}\mathbf{1}[e\in T]$, where the sum runs over edges incident to $u$ whose other endpoint lies in the same block of $P_0$ as $u$. As a sum of indicators over a fixed edge set, $d_T^{\mathrm{in}}(u)$ is a non-negative modular function of $T$, rising by one when an internal edge of $u$ is added and unchanged otherwise; dividing by the constant $\kappa_u$ preserves modularity.

The map $g(z)=\sqrt{\min(1,z)}$ is non-decreasing and concave on $[0,\infty)$. Composing a non-decreasing concave function with a non-negative modular function yields a monotone submodular function: monotonicity is immediate since both maps are non-decreasing, and submodularity holds because the discrete marginal $g(z+\delta)-g(z)$ is non-increasing in $z$ for fixed $\delta\ge 0$ by concavity, so the gain of adding an internal edge to $u$ shrinks as $u$ accumulates internal edges. Each term $g(d_T^{\mathrm{in}}(u)/\kappa_u)$ is therefore monotone submodular, and $U$, a non-negative linear combination of these terms with weights $1/|V_{\mathrm{int}}|>0$, is monotone submodular as well.

Conditioning on a fixed set preserves both properties: $F(S)=U(E_{\mathrm{PB}}\cup S)-U(E_{\mathrm{PB}})$ is the marginal of a monotone submodular function over the base $E_{\mathrm{PB}}$, hence monotone and submodular on $E\setminus E_{\mathrm{PB}}$, with $F(\varnothing)=0$. The $(1-1/e)$ bound for greedy maximization of a monotone submodular function under a cardinality constraint then applies~\cite{ref_submodular}. \qed
\end{proof}

Theorem~\ref{thm:submodular} justifies the greedy allocation in Algorithm~\ref{alg:topobudget}: for the backbone TopoBudget constructs, the chosen residual edges are within a $(1-1/e)$ factor of the best possible internal-degree coverage at the given budget. We do not claim optimality over all valid backbones; the guarantee is for the residual allocation conditioned on the constructed backbone. The stochastic greedy variant retains a $(1-1/e-\epsilon)$ bound in expectation~\cite{ref_stochastic_greedy}, at lower cost on large graphs.

\section{Experiments}
\label{sec:experiments}

\begin{figure}[t]
\centering
\includegraphics[width=0.86\textwidth]{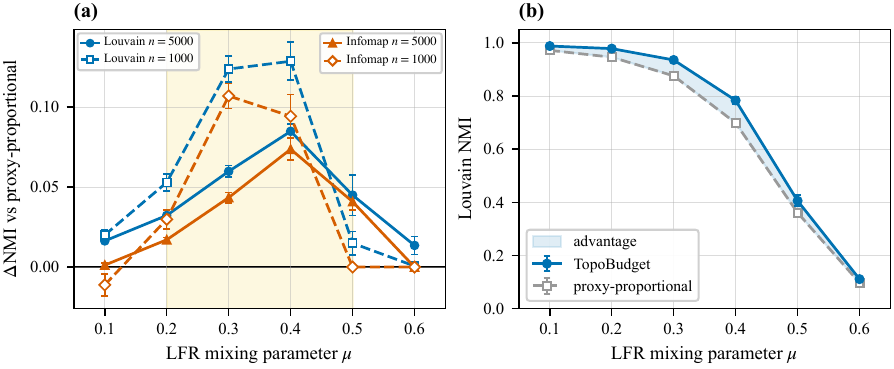}
\caption{Held-out conditional advantage on unseen LFR instances. (a) NMI gain of TopoBudget over the proxy-proportional control as a function of the mixing parameter $\mu$, for Louvain (blue) and Infomap (orange); solid lines are $n=5000$ and dashed lines are $n=1000$. The shaded band marks the intermediate-mixing region where the gain is largest. (b) Absolute Louvain NMI of TopoBudget and the proxy-proportional control at $n=5000$, with the shaded area showing TopoBudget's advantage. The advantage follows an inverted-U shape, largest at intermediate mixing.}
\label{fig:conditional}
\end{figure}

\paragraph{Setup.}
We evaluate on six real Web and social graphs and on synthetic LFR benchmarks; dataset statistics are in Table~\ref{tab:datasets}. Community structure is recovered with two detectors that are independent of the proxy, Louvain~\cite{ref_louvain} as the primary detector and Infomap~\cite{ref_infomap} as a second, and quality is measured by the normalized mutual information (NMI)~\cite{ref_nmi_danon,ref_nmi} between the partition of the sparsifier and the partition of the original graph at equal budget. We report results at residual fractions $\alpha\in\{0.1,0.3,0.5\}$. The objective was frozen before these experiments, and all synthetic evaluation uses held-out LFR instances generated with seeds disjoint from development.

\paragraph{Datasets and preprocessing.}
The six real graphs are drawn from the SNAP collection~\cite{ref_snap,ref_egofacebook} and the Wikipedia and LastFM datasets of~\cite{ref_musae}. They span densities from sparse collaboration (ca-GrQc, average degree 6.5) to dense Web graphs (squirrel, average degree 76.3) and include social, Web, and collaboration networks. Each graph is reduced to a simple undirected graph: directed inputs are symmetrized with the OR rule, self-loops and multi-edges are removed, and we restrict to the largest connected component, so $c=1$ and the backbone is a spanning tree with $n-1$ edges. All methods are compared under the same unweighted equal-budget model: every sparsifier returns an edge subset of the same size, and community detection runs on the resulting unweighted subgraph. In particular, effective resistance is computed on the full unweighted graph and used only to rank and select edges, so no method benefits from retaining edge weights; the backbone-constrained baselines share the identical backbone and differ only in how the residual budget is spent. For each graph the full-graph partition is the reference. The synthetic family uses the LFR benchmark~\cite{ref_lfr}, which plants communities with a controllable mixing parameter $\mu$, the fraction of each node's edges that cross community boundaries; larger $\mu$ makes communities harder to recover. We generate LFR graphs at $n\in\{1000,5000\}$ for $\mu\in\{0.1,\dots,0.6\}$ with ten seeds per configuration, disjoint from those used during development.

\begin{table}[t]
\caption{Dataset statistics. Six real Web and social graphs and the synthetic LFR family. Average degree is $2m/n$. The LFR family is generated at two sizes across a range of mixing parameters, with ten held-out seeds per configuration.}
\label{tab:datasets}
\centering
\begin{tabular}{lrrrl}
\hline
Graph & $n$ & $m$ & Avg.\ deg. & Type\\
\hline
ego-Facebook & 4039 & 88234 & 43.7 & social\\
Wiki-Vote & 7066 & 100736 & 28.5 & social\\
ca-GrQc & 4158 & 13422 & 6.5 & collaboration\\
chameleon & 2277 & 31371 & 27.6 & Web\\
squirrel & 5201 & 198353 & 76.3 & Web\\
LastFM Asia & 7624 & 27806 & 7.3 & social\\
\hline
LFR & 1000, 5000 & varies & 14.5 & synthetic\\
\hline
\end{tabular}
\end{table}

\begin{table}[t]
\caption{Community preservation on held-out LFR instances at $n=5000$, by mixing parameter $\mu$, averaged over ten seeds and budgets $\alpha\in\{0.1,0.3,0.5\}$. NMI is measured against the planted partition. All four methods retain the backbone; they differ only in residual allocation. The best value in each column is in bold. TopoBudget is highest at every mixing level under both detectors.}
\label{tab:lfr}
\centering
\begin{tabular}{llccc}
\hline
Detector & Method & $\mu=0.2$ & $\mu=0.3$ & $\mu=0.4$\\
\hline
\multirow{4}{*}{Louvain}
 & \textbf{TopoBudget (Ours)} & \textbf{0.979} & \textbf{0.936} & \textbf{0.784}\\
 & Proxy-Prop. & 0.946 & 0.876 & 0.699\\
 & Local Degree & 0.887 & 0.798 & 0.607\\
 & Random & 0.857 & 0.712 & 0.463\\
\hline
\multirow{4}{*}{Infomap}
 & \textbf{TopoBudget (Ours)} & \textbf{0.962} & \textbf{0.949} & \textbf{0.878}\\
 & Proxy-Prop. & 0.945 & 0.906 & 0.804\\
 & Local Degree & 0.902 & 0.857 & 0.754\\
 & Random & 0.884 & 0.799 & 0.663\\
\hline
\end{tabular}
\end{table}

\begin{table}[t]
\caption{Equal-budget comparison on the six real graphs, averaged over graphs and budgets $\alpha\in\{0.1,0.3,0.5\}$. NMI is measured against the original-graph partition for each detector. Mismatch rate is the fraction of filtration thresholds at which the component partition differs from that of the original graph (0 = exact preservation). Runtime is the mean sparsification time. The best value in each column among topology-preserving methods is in bold; methods are grouped into backbone-constrained (top), unconstrained (middle), and the no-backbone ablation of our objective (bottom). Our method is highlighted.}
\label{tab:main}
\centering
\begin{tabular}{lcccr}
\hline
Method & Louvain NMI & Infomap NMI & Mismatch rate & Runtime (s)\\
\hline
\textbf{TopoBudget (Ours)} & \textbf{0.739} & 0.784 & \textbf{0.000} & 3.39\\
PB + Eff. Resistance & 0.703 & \textbf{0.800} & \textbf{0.000} & 24.94\\
PB + Proxy-Prop. & 0.699 & 0.759 & \textbf{0.000} & \textbf{0.04}\\
PB + Local Similarity & 0.662 & 0.754 & \textbf{0.000} & 0.19\\
PB + Local Degree & 0.623 & 0.724 & \textbf{0.000} & 0.01\\
\hline
Eff. Resistance & 0.701 & 0.792 & 0.983 & 24.82\\
Local Similarity & 0.606 & 0.738 & 0.615 & 0.20\\
Local Degree & 0.520 & 0.630 & 0.983 & 0.02\\
\hline
TopoBudget, no backbone & 0.726 & 0.764 & 0.996 & 0.18\\
\hline
\end{tabular}
\end{table}

\begin{figure}[t]
\centering
\includegraphics[width=0.86\textwidth]{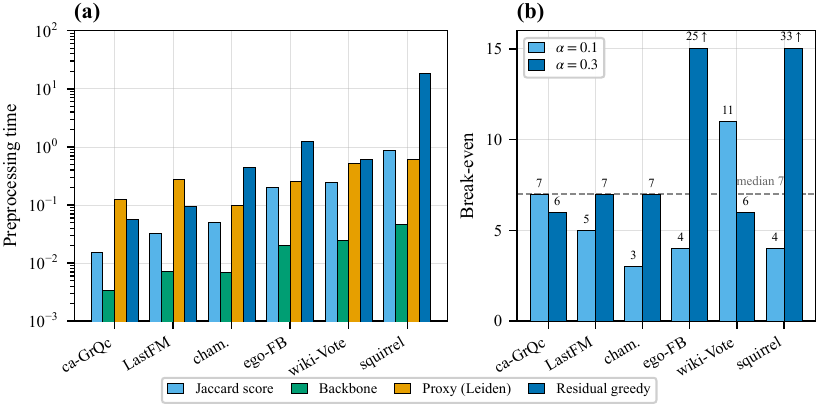}
\caption{Preprocessing cost and workload break-even on the real graphs. (a) Preprocessing time in seconds (log scale), decomposed into Jaccard scoring, backbone construction, proxy detection, and residual greedy, at $\alpha=0.3$; graphs ordered by edge count. The backbone is negligible. (b) Workload break-even, the number of Louvain analyses on the sparsified graph that amortize preprocessing, at $\alpha=0.1$ and $\alpha=0.3$; the y-axis is capped at 15 with larger values annotated. A handful of reuses suffice.}
\label{fig:runtime}
\end{figure}

\paragraph{Held-out conditional advantage.}
Figure~\ref{fig:conditional} reports the NMI gain of TopoBudget over the proxy-proportional control on held-out LFR instances, as a function of the mixing parameter $\mu$. The advantage follows an inverted-U shape: it is small when communities are nearly separable or nearly dissolved, and largest at intermediate mixing where the choice of retained internal edges matters most. The pattern holds for both detectors and across graph sizes, confirming that the frozen objective carries a real and reproducible benefit on unseen instances rather than an artifact of tuning. Table~\ref{tab:lfr} gives the absolute NMI at $n=5000$, where TopoBudget is highest at every mixing level under both detectors.

\paragraph{Equal-budget baseline comparison.}
Table~\ref{tab:main} compares TopoBudget against backbone-constrained and unconstrained sparsifiers on the six real graphs at equal budget. Three findings stand out. First, on the primary detector TopoBudget attains the highest NMI on average among all methods that preserve exact multiscale topology, ahead of the effective-resistance, proxy-proportional, local-similarity, and local-degree baselines. Second, on Infomap TopoBudget is competitive with the strongest topology-preserving baseline, effective resistance, which edges ahead on this detector; both, however, preserve topology exactly, with a component-partition mismatch rate of zero, whereas every unconstrained sparsifier breaks the partition at most thresholds. Third, TopoBudget reaches this quality far faster than effective resistance: on most graphs it is one to two orders of magnitude faster, and only on the densest graph, squirrel, are the two comparable, which motivates the stochastic variant for dense inputs. Exact preservation is confirmed directly on every graph: each backbone-constrained method attains zero component-partition mismatch at all sampled thresholds, from 74 to 100 per graph, while the unconstrained spectral and degree baselines break the partition at most of them.

\paragraph{Role of the backbone.}
The no-backbone row applies the same objective without the mandatory backbone. On the six real graphs the backbone is beneficial: TopoBudget attains higher NMI than the no-backbone variant, by $0.013$ under Louvain and $0.020$ under Infomap, while additionally guaranteeing zero topology mismatch. The no-backbone variant, despite dropping the guarantee, does not gain quality in aggregate on the real graphs, and it breaks the multiscale structure at almost every threshold. The backbone therefore secures exact multiscale preservation at no cost to real-graph community quality.

\paragraph{Objective-component ablation.}
Table~\ref{tab:objective-ablation} isolates the residual objective design while keeping the backbone, budget, and evaluation fixed. The pair term is an alternative proxy-community-pair coverage term evaluated only as a design variant and is not used in the final method. The degree-only objective achieves the highest mean Louvain NMI on both held-out LFR instances and real graphs, so TopoBudget uses the degree-only form.

\begin{table}[t]
\centering
\caption{Objective-component ablation. Mean Louvain NMI, averaged over budgets; higher is better. All variants use the same persistence backbone and differ only in residual allocation.}
\label{tab:objective-ablation}
\small
\setlength{\tabcolsep}{6pt}
\begin{tabular}{lcc}
\hline
Objective variant & LFR & Real graphs \\
\hline
Proxy-proportional & 0.837 & 0.699 \\
Pair-only ($\lambda=0$) & 0.821 & 0.675 \\
Full ($\lambda=0.5$) & 0.864 & 0.727 \\
Degree-only ($\lambda=1$, ours) & \textbf{0.904} & \textbf{0.739} \\
\hline
\end{tabular}
\end{table}

\paragraph{Runtime and workload break-even.}
Figure~\ref{fig:runtime} decomposes preprocessing time and reports the workload break-even. The backbone construction is negligible, under $0.3\%$ of preprocessing on every graph, and the cost is dominated by proxy detection and residual selection. Because TopoBudget is built once and reused, a small number of analyses amortizes the preprocessing: the break-even is a median of seven Louvain analyses at $\alpha=0.3$ and fewer at higher sparsity. On the densest graph the exact residual greedy is the main expense, which again points to the stochastic variant as the scalable path.

\section{Conclusion and Limitations}
\label{sec:conclusion}

We introduced persistent-connectivity-preserving sparsification, which reduces a graph to a budgeted subgraph while exactly preserving the labeled component partition at every filtration threshold. TopoBudget enforces this constraint with a tie-aware persistence backbone and then allocates the residual edge budget by a backbone-conditioned submodular objective. The result is a reusable sparse representation that couples exact multiscale connectivity with budgeted community preservation.

The experiments show where this constraint is useful. TopoBudget is the strongest topology-preserving method on average under Louvain and remains competitive with effective resistance under Infomap, while every unconstrained sparsifier breaks the component partition at many thresholds. The no-backbone ablation shows that the backbone supplies a categorical guarantee and, on the real graphs, improves rather than hurts average NMI. The method is therefore most appropriate when one sparsified graph must support repeated or alternative analyses rather than a single detector optimized in isolation.

Three limitations remain. First, exact greedy is the main cost on the densest graph; the stochastic greedy variant has the same approximation guarantee up to an additive $\epsilon$, but its large-scale empirical validation is future work. Second, the residual objective depends on the proxy partition, so weak proxies provide weaker guidance. Third, the Jaccard filtration is not equally informative on all graphs, suggesting future work on richer edge-relevance filtrations. These limitations affect empirical quality and scalability, but not the exact preservation guarantee, which holds for every graph.

\begin{credits}
\subsubsection{\discintname}
The authors have no competing interests to declare that are relevant to the content of this article.
\end{credits}

%
%
%
\bibliographystyle{splncs04}
\bibliography{references}

\end{document}